\documentclass[aps,prl,onecolumn,superscriptaddress,floatfix,reprint]{revtex4}
\usepackage[english]{babel}
\usepackage[utf8]{inputenc}
\usepackage{amssymb}
\usepackage{amsthm}
\usepackage{mathtools}
\usepackage{dsfont}
\usepackage{graphicx}
\usepackage{cancel}

\usepackage[draft,inline,nomargin]{fixme} \fxsetup{theme=color}
\newcommand{\tr}{{\rm tr}\,}
\newcommand{\ket}[1]{\left|{#1}\right\rangle}
\newcommand{\bra}[1]{\left\langle{#1}\right|}
\newcommand{\braket}[2]{\langle{#1}|{#2}\rangle}

\newcommand{\ketbra}[2]{\left|{#1}\rangle\!\langle{#2}\right|}

\newtheorem{theorem}{Theorem}

\newtheorem{proposition}{Proposition}

\begin{document}

\title{Fixed Point Engineering of Quantum Batteries}

\author{Esteban Mart\'inez Vargas}
\affiliation{Calle centella 3, Smza 18 Mza 3, 77505 Cancún México}
\email{estebanmv@protonmail.com}

\begin{abstract}
    We develop, in the context of open quantum systems, a general
    formalism to build completely positive and unital (CPU) maps that
    have a desired fixed point. In contrast to quantum channels, 
    these maps act on observables, not on states. As the CPU maps
    will have a known observable as their fixed point, we know by
    construction that in the asymptotic limit the map will yield a
    constant observable. Therefore, if we implement this in a
    differential equation, we have a linear growth of a quantity 
    with respect to time. Applying these ideas to the charging of quantum
    batteries, we obtain an upper bound on the charging power of 
    batteries by allowing interchanges of multiple states every
    time the CPU map is applied.
\end{abstract}

\maketitle
\section{Introduction}
Electronics have filled our lives with lots of devices, from
smartphones and computers all the way to contemporary electric cars.
Many of these devices function through stored energy in batteries.
Usual batteries work through a chemical reaction, creating an
electric potential difference \cite{VincentScrosatiMdrnBatt1997}.
Nowadays, as we consider relevant quantum technology like quantum computation \cite{NielsenChuangQuantum2011},
there exists the motivation to ask the question of
using quantum devices to store energy. It would also be desirable
to ask relevant questions about efficient transmission of energy.
We know that quantum mechanics offers collective effects on multiple
copies of a quantum state, as in ref. \cite{masson_universality_2022}
for example, and we would like to use them in designing
more efficient batteries.

Quantum batteries were first proposed by Alicki and Fannes \cite{alicki_entanglement_2013}; the original idea was to consider
a generic quantum Hamiltonian and observe how much work can be
extracted from it. An important figure of merit is the ergotropy
\cite{allahverdyan_maximal_2004}, which consists of the difference between the initial average
energy and the average energy after a unitary is implemented; this
unitary must minimize the difference.
Another relevant figure of merit, which we will focus on, is the
average charging power of the battery \cite{binder_quantacell_2015},
which is given by the energy as
\begin{equation}
P(t) = \frac{E(t)-E(0)}{t},
\end{equation}
where $t$ is time.

There have been several approaches to investigating this kind of
batteries. From Dicke models \cite{ferraro_high-power_2018}, spin
chains \cite{rossini_many-body_2019}, and based on the SYK model \cite{rossini_quantum_2020}.
These approaches are issued in closed quantum systems; only the
Hamiltonian of a battery system and a charger and an interaction is
needed.

There are, however, other approaches which are issued in the context
of open quantum systems, where an environment is considered, and the
interaction with it drives the main system \cite{breuer2002theory}. In this case, we consider
not only a Hamiltonian that operates in a system plus an interaction
but also dissipative operators which imply a loss of
energy to the environment. There can also be an input of energy
from the environment.
There are examples of
quantum batteries in this setting, from topological batteries \cite{Topological_Qua_Lu_Zh_2025} to
non-reciprocal ones \cite{Nonreciprocal_Q_Ahmadi_2024}. There are advantages of working with these
systems, as they constitute the most general way to describe a
quantum system consisting of a main system and environment.

Here, we use the open quantum system setting and define an
evolution of an operator in the interaction or Heisenberg picture.
We engineer specific evolutions, in this case Completely Positive
Unital (CPU) maps \cite{Heinosaari_Ziman_2011} which have a specific constant operator as its
fixed point \cite{2209.06806v1}. This means that after many implementations of the CPU
map, the result will be the desired constant operator. In other
words, the operator will be $A(t)=A_c t$, with $A_c$ a constant
operator. This means linear growth. If this operator is the
energy, we observe that $\langle A_c\rangle$ will correspond to the
average charging power of a battery. This is a general description
of these systems. As a concrete example, we give a general bound
on the maximum charging power a quantum battery can yield, when the
transmission between the environment and the battery is perfect.
We find that collective strategies can give a much better performance
for these batteries.
\section{CPU maps}
\label{sec:cpu}

The continuous evolution of an open quantum system can be described
with a master equation~\cite{breuer2002theory,wiseman_milburn_2009}. This global evolution, which assumes
openness to an environment, is
a generalization of Schrödinger’s equation, normally called the Lindblad equation or the Markovian quantum master equation. This Lindblad
equation fulfills the physical requirement that the evolution of a quantum state has to be
completely positive and trace preserving; in other words, it has to be described with a quantum channel \cite{WatrousTheTheoryof2018}.
Notice that this point of view is centered on
the quantum state, which is the object that evolves in time. There exists a dual picture
of the physics involved. Such a picture assumes that the observables evolve in time,
which is known as Heisenberg’s \cite{Heinosaari_Ziman_2011}. A corresponding open quantum system can be obtained in this
picture, which is known as the quantum Langevin equation or adjoint quantum master equation
\cite{wiseman_milburn_2009,breuer2002theory}. Here, the main point to notice is
that the evolution is given by the dual of a quantum channel.
Notice that applying the channel $\Phi$ to a state $\rho$ would produce, with
respect the observable $A$ the following
\begin{equation}
\tr[\Phi[\rho]A] = \tr[\rho\Phi^\dagger[A]],
\label{eq:primaldualmaps}
\end{equation}
where $\Phi^\dagger$ denotes the dual transformation for $\Phi$. We notice that the
numerical value is the same; the change in time is transformed from the state
to the observable. We notice that the adjoint transformation $\Phi^\dagger$ has to be
unital~\cite{Heinosaari_Ziman_2011}, as
\begin{equation}
\tr[\Phi[\rho]\mathds{I}] = \tr[\rho\Phi^\dagger[\mathds{I}]] = 1.
\end{equation}
As $\Phi$ is trace preserving,
\begin{equation}
\Phi^\dagger[\mathds{I}]=\mathds{I}.
\end{equation}
Therefore $\Phi^\dagger$ is unital. Also, as $(\Phi\otimes\mathds{I})^\dagger=\Phi^\dagger\otimes\mathds{I}$ then
$\Phi^\dagger$ is completely positive~\cite{Heinosaari_Ziman_2011}.
\section{$\Phi^\dagger$ with a given fixed point}
\label{sec:PhiFxPoint}

Suppose that a given observable $A$ on a Hilbert space is desired; one would
like to build completely positive and unital transformations $\Phi^\dagger$ that
have the desired observable $A$ as a fixed point. We can characterize this in
terms of its Choi representation. Suppose the Choi representation of a
completely positive and unital transformation $\Phi^\dagger$ is given by $Z_A$ then,
\begin{align}
    \tr_{\mathcal{H}2}[Z_A(\mathds{I}\otimes\mathds{I})] &= \mathds{I},\nonumber\\
Z_A &\geq 0.
\label{eq:cpucond}
\end{align}
The first condition asks for the transformation to be unital and the second asks
to be completely positive \cite{WatrousTheTheoryof2018}. If we want $Z_A$ to have
the observable $A$ as a fixed point, then
\begin{equation}
\Phi^\dagger[A]=\tr{\mathcal{H}2}[Z_A(\mathds{I}\otimes A^\intercal)] = A.
\label{eq:fixedcond}
\end{equation}
Given $A$ we can obtain a $Z_A$ that fulfills the fixed point condition (\ref{eq:fixedcond})
as well as the completely positive and unital conditions (\ref{eq:cpucond}).
Given $A\in\mathcal{H}$, then for a vector $\ket{v}\in\mathcal{H}$
\begin{equation}
Z_A \equiv A\otimes\frac{\ketbra{v}{v}^\intercal}{\bra{v}A\ket{v}}+\frac{\left(\mathds{I}-\frac{A}{\bra{v}A\ket{v}}\right)}{\left(\frac{N}{\tr A}-\frac{1}{\bra{v}A\ket{v}}\right)}\otimes\left(\frac{\mathds{I}}{\tr A}-\frac{\ketbra{v}{v}^\intercal}{\bra{v}A\ket{v}}\right)
\label{eq:ZSol}
\end{equation}
We can easily check that $Z_A$ fulfills the conditions of unitality (\ref{eq:cpucond}) and fixed point (\ref{eq:fixedcond}).
The positivity condition is a bit more delicate and thus treated as a theorem.
\begin{theorem}
$Z_A\geq0$ if and only if
\begin{align}
    A &\geq \mathds{I}\left(\frac{\bra{v}A\ket{v}-\tr A}{N-1}\right),\nonumber\\
\mathds{I}\bra{v}A\ket{v} &\geq A.
\end{align}
Where $N=\text{dim}\mathcal{H}$.
\end{theorem}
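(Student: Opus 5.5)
The plan is to block-diagonalize $Z_A$ with respect to the orthogonal projectors $P\equiv\ketbra{v}{v}^\intercal$ and $Q\equiv\mathds{I}-P$ on the second factor. Then $Z_A\geq0$ splits into two independent operator inequalities on the first factor. I take $\ket{v}$ normalized and write $a\equiv\bra{v}A\ket{v}$ and $t\equiv\tr A$. Carrying this out exactly produces $\tr A-\bra{v}A\ket{v}$ in the lower bound, not the statement's $\bra{v}A\ket{v}-\tr A$. I show below that the statement as printed is false, so the plan proves the corrected version.

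\textbf{Step 1 (decomposition).} Write $\frac{\mathds{I}}{t}-\frac{P}{a}=\left(\frac1t-\frac1a\right)P+\frac1t Q$ and $c\equiv\frac Nt-\frac1a=\frac{Na-t}{ta}$, assuming $Na\neq t$ so that (\ref{eq:ZSol}) is defined. Collecting terms gives
\begin{equation}
Z_A = \frac{(N-1)A+(a-t)\mathds{I}}{Na-t}\otimes P+\frac{a\mathds{I}-A}{Na-t}\otimes Q .
\end{equation}
For the $P$ coefficient I use $\frac{1/t-1/a}{c}=\frac{a-t}{Na-t}$, so it equals $\frac{A}{a}+\frac{(a-t)(a\mathds{I}-A)}{a(Na-t)}$. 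Over the common denominator the $A$ terms combine into $a(N-1)A$, and the identity terms into $a(a-t)\mathds{I}$.

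\textbf{Step 2 (reduction).} Since $P$ and $Q$ are orthogonal projectors summing to $\mathds{I}$, $X\otimes P+Y\otimes Q\geq0$ holds if and only if $X\geq0$ and $Y\geq0$. To see this, compress with $\mathds{I}\otimes P$ and $\mathds{I}\otimes Q$. In the physically relevant case $Na>t$, this gives exactly
\begin{equation}
a\mathds{I}\geq A,\qquad A\geq \mathds{I}\,\frac{\tr A-\bra{v}A\ket{v}}{N-1}.
\end{equation}
If instead $Na<t$, both inequalities reverse. Those reversed conditions are incompatible with $\ket{v}$ being a direction where $A$ is maximal unless $A\propto\mathds{I}$, and I would dispose of that case separately.

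\textbf{Main obstacle: the sign in the statement.} The upper bound $\mathds{I}\bra{v}A\ket{v}\geq A$ matches the statement. The lower bound, however, comes out as $\frac{\tr A-\bra{v}A\ket{v}}{N-1}$, and the stated version with $\bra{v}A\ket{v}-\tr A$ cannot be proved as printed. Take $N=3$, $A=\mathrm{diag}(1,\tfrac12,0)$ and $\ket{v}=e_1$, so $a=1$, $t=\tfrac32$ and $Na>t$. The stated conditions hold: $A\geq-\tfrac14\mathds{I}$ and $\mathds{I}\geq A$. Yet the $P$ block $(2A-\tfrac12\mathds{I})/\tfrac32$ has eigenvalue $-\tfrac13$, so $Z_A\not\geq0$.

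Conversely, take $N=2$, $A=\mathrm{diag}(1,-1)$ and $\ket{v}=e_1$. Then $Z_A\geq0$, with blocks $(A+\mathds{I})/2$ and $(\mathds{I}-A)/2$, but the stated bound $A\geq\mathds{I}$ fails. So the "if and only if" holds precisely with numerator $\tr A-\bra{v}A\ket{v}$, under $N\bra{v}A\ket{v}>\tr A$. The proof I would write is Steps 1–2 with the sign typo in the statement corrected to this form.
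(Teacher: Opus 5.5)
Your argument is the paper's argument done correctly. The paper also compresses $Z_A$ with $\mathds{I}\otimes\ketbra{v}{v}^\intercal$ and $\mathds{I}\otimes(\mathds{I}-\ketbra{v}{v}^\intercal)$ to get two block inequalities, then reassembles them for the converse. Your block formula is right. Clearing denominators in the paper's first block inequality gives $(N-1)A+(\bra{v}A\ket{v}-\tr A)\mathds{I}\geq0$, so the paper's ``after some algebra'' step contains exactly the sign slip you found. Both of the paper's reductions also silently assume $N\bra{v}A\ket{v}>\tr A$, since each block carries the factor $1/(N\bra{v}A\ket{v}-\tr A)$. Your first counterexample checks out directly from (\ref{eq:ZSol}): the vector $e_3\otimes e_1$ gives $-\tfrac13$. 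So the statement as printed is false, and your Steps 1--2 prove the corrected version.

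Two side claims need repair. First, your second counterexample has $\tr A=0$, where (\ref{eq:ZSol}) is undefined. In general you need $\tr A\neq0\neq\bra{v}A\ket{v}$, not just $N\bra{v}A\ket{v}\neq\tr A$. Use $A=\mathrm{diag}(1,-\tfrac12)$, $\ket{v}=e_1$ instead. It has $N\bra{v}A\ket{v}=2>\tfrac12=\tr A$ and blocks $\mathrm{diag}(1,0)$ and $\mathrm{diag}(0,1)$, so $Z_A\geq0$, while the printed bound $A\geq\tfrac12\mathds{I}$ fails.

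Second, and more importantly, the case $N\bra{v}A\ket{v}<\tr A$ cannot be disposed of as you suggest. Nothing forces $\ket{v}$ to be a maximal direction there; in fact $\bra{v}A\ket{v}\mathds{I}\geq A$ implies $N\bra{v}A\ket{v}\geq\tr A$. Also, $A\propto\mathds{I}$ lies on the excluded set $N\bra{v}A\ket{v}=\tr A$. The reversed conditions are satisfied exactly by $A=\bra{v}A\ket{v}\ketbra{v}{v}+\mu(\mathds{I}-\ketbra{v}{v})$ with $\mu>\bra{v}A\ket{v}$. For instance, $N=2$, $A=\mathrm{diag}(1,2)$, $\ket{v}=e_1$ gives blocks $\mathrm{diag}(1,0)$ and $\mathrm{diag}(0,1)$, so $Z_A\geq0$ although $\bra{v}A\ket{v}\mathds{I}\geq A$ fails. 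Hence $N\bra{v}A\ket{v}>\tr A$ must remain an explicit hypothesis, as in your final sentence, or the theorem must list this extra family.
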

\begin{proof}
$\Rightarrow$ We multiply $Z_A$ given by Eq. (\ref{eq:ZSol}) by $\mathds{I}\otimes\ketbra{v}{v}^\intercal$
and get the trace in the second Hilbert space ($\mathcal{H}2$). This yields the inequality
\begin{equation}
\frac{A}{\bra{v}A\ket{v}}+\left(\mathds{I}-\frac{A}{\bra{v}A\ket{v}}\right)\frac{\frac{1}{\tr A}-\frac{1}{\bra{v}A\ket{v}~}}{\frac{N}{\tr A}-\frac{1}{\bra{v}A\ket{v}~}}\geq0.
\label{eq:vproy}
\end{equation}
After some algebra, we reach
\begin{equation}
A \geq \mathds{I}\left(\frac{\bra{v}A\ket{v}-\tr A}{N-1}\right).
\end{equation}
Then, we multiply $Z_A$ given by Eq. (\ref{eq:ZSol}) by $\mathds{I}\otimes\ketbra{v^\perp}{v^\perp}$ where
$\ket{v^\perp}$ is a vector such that $\braket{v^\perp}{v}=0$. We also get the trace in
$\mathcal{H}2$ to get
\begin{equation}
\frac{\mathds{I}-\frac{A}{\bra{v}A\ket{v}~}}{\frac{N}{\tr A}-\frac{1}{\bra{v}A\ket{v}~}}\frac{1}{\tr A}\geq0,
\label{eq:perpproy}
\end{equation}
therefore it is immediate to see
\begin{equation}
\mathds{I}\bra{v}A\ket{v}\geq A.
\end{equation}
$\Leftarrow$ Taking equation (\ref{eq:vproy}) and multiplying by $\otimes\ketbra{v}{v}^\intercal$ yields
\begin{equation}
\frac{A}{\bra{v}A\ket{v}}\otimes\ketbra{v}{v}^\intercal+\left(\mathds{I}-\frac{A}{\bra{v}A\ket{v}}\right)\otimes\ketbra{v}{v}^\intercal\frac{\frac{1}{\tr A}-\frac{1}{\bra{v}A\ket{v}~}}{\frac{N}{\tr A}-\frac{1}{\bra{v}A\ket{v}~}}\geq0.
\label{eq:Av}
\end{equation}
Multiplying Eq. (\ref{eq:perpproy}) by $\otimes\mathds{I}{/v}$ where $\mathds{I}{/v}+\ketbra{v}{v}^\intercal=\mathds{I}$,
yields
\begin{equation}
\frac{\mathds{I}-\frac{A}{\bra{v}A\ket{v}~}}{\frac{N}{\tr A}-\frac{1}{\bra{v}A\ket{v}~}}\frac{\otimes\mathds{I}{/v}}{\tr A}\geq0.
\label{eq:Aperp}
\end{equation}
Summing Eqs. (\ref{eq:Av}) and (\ref{eq:Aperp}) we finally get
\begin{equation}
A\otimes\frac{\ketbra{v}{v}^\intercal}{\bra{v}A\ket{v}}+\frac{\left(\mathds{I}-\frac{A}{\bra{v}A\ket{v}}\right)}{\left(\frac{N}{\tr A}-\frac{1}{\bra{v}A\ket{v}}\right)}\otimes\left(\frac{\mathds{I}}{\tr A}-\frac{\ketbra{v}{v}^\intercal}{\bra{v}A\ket{v}}\right)\geq0.
\end{equation}
\end{proof}
It should be stressed that the convergence is not to the observable $A$ but to a closely
related observable. We then introduce the following proposition.
\begin{proposition}
Given an initial observable $B$ if $\tr[A]\neq0$ and $\bra{v}A\ket{v}\neq0$ then
$\Phi^\dagger[B]$ is a fixed point of $\Phi^\dagger$, in other words, for all $B$
\begin{equation}
\Phi^\dagger[\Phi^\dagger[B]] =\Phi^\dagger[B].
\end{equation}
\label{thm:Prop1}
\end{proposition}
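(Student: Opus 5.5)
The plan is to compute $\Phi^\dagger[B]$ explicitly for an arbitrary observable $B$ and observe that its range is contained in $\mathrm{span}\{A,\mathds{I}\}$. Idempotence then follows from linearity together with two facts already established: unitality and the fixed-point property of $A$.

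\textbf{Step 1: explicit action of $\Phi^\dagger$.} Write $a=\bra{v}A\ket{v}$, $t=\tr A$, and
\begin{equation*}
C=\frac{\mathds{I}-A/a}{N/t-1/a}.
\end{equation*}
The hypotheses $a\neq0$ and $t\neq0$ are exactly what make $Z_A$ in Eq.~(\ref{eq:ZSol}) well defined. I will also implicitly need $N/t\neq 1/a$, and I would state that explicitly. Using the Choi action from Eq.~(\ref{eq:fixedcond}) on a product operator, one gets
\begin{equation*}
\tr_{\mathcal{H}2}[(X\otimes Y^\intercal)(\mathds{I}\otimes B^\intercal)]=X\,\tr[Y^\intercal B^\intercal]=X\,\tr[BY].
\end{equation*}
Applying this term by term to $Z_A$ yields
\begin{equation*}
\Phi^\dagger[B]=\frac{\bra{v}B\ket{v}}{a}\,A+\left(\frac{\tr B}{t}-\frac{\bra{v}B\ket{v}}{a}\right)C .
\end{equation*}
Since $C$ is itself a linear combination of $\mathds{I}$ and $A$, this gives $\Phi^\dagger[B]=x_B A+y_B\mathds{I}$ for scalars $x_B,y_B$ that depend linearly on $B$.

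\textbf{Step 2: apply $\Phi^\dagger$ again.} By linearity,
\begin{equation*}
\Phi^\dagger[\Phi^\dagger[B]]=x_B\,\Phi^\dagger[A]+y_B\,\Phi^\dagger[\mathds{I}].
\end{equation*}
The fixed-point condition (\ref{eq:fixedcond}) gives $\Phi^\dagger[A]=A$, and the unitality condition in (\ref{eq:cpucond}) gives $\Phi^\dagger[\mathds{I}]=\mathds{I}$. Hence
\begin{equation*}
\Phi^\dagger[\Phi^\dagger[B]]=x_B A+y_B\mathds{I}=\Phi^\dagger[B].
\end{equation*}
As a sanity check, I would verify both conditions directly from the formula in Step 1. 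For $B=A$, the coefficient of $C$ is $t/t-a/a=0$, which leaves $A$. For $B=\mathds{I}$, the coefficient of $A$ is $1/a$ and the coefficient of $C$ is $N/t-1/a$, which combine to $A/a+\mathds{I}-A/a=\mathds{I}$.

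\textbf{Main obstacle.} The only real work is the partial-trace bookkeeping in Step 1, namely getting the transposes right so that each term contributes the scalar $\tr[BY]$. Once the range of $\Phi^\dagger$ is seen to lie in $\mathrm{span}\{A,\mathds{I}\}$, the proposition is immediate. Positivity (Theorem 1) is not needed for this statement.
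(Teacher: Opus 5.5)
Your proof is correct and is essentially the paper's argument. The paper also writes $\Phi^\dagger[B]$ explicitly and then applies $\Phi^\dagger$ term by term; its cancellations amount to checking inline that $\Phi^\dagger$ fixes $A$ and $\mathds{I}-A/\bra{v}A\ket{v}$, which you instead take from the already-verified unitality and fixed-point conditions, and your explicit note that $N/\tr A\neq 1/\bra{v}A\ket{v}$ is needed for $Z_A$ to be defined is a worthwhile addition the paper leaves implicit.
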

\begin{proof}
We first write $\Phi^\dagger[B]$ explicitly
\begin{equation}
\Phi^\dagger[B] = A\frac{\bra{v}B\ket{v}}{\bra{v}A\ket{v}}+\left(\mathds{1}-\frac{A}{\bra{v}A\ket{v}}\right)\frac{\left(\frac{\tr B}{\tr A}-\frac{\bra{v}B\ket{v}}{\bra{v}A\ket{v}~}\right)}{\left(\frac{N}{\tr A}-\frac{1}{\bra{v}A\ket{v}}\right)}.
\end{equation}
Now it is simple to observe that
\begin{align}
    \Phi^\dagger[\Phi^\dagger[B]] &= A\frac{\bra{v}B\ket{v}}{\bra{v}A\ket{v}}+\frac{A}{\bra{v}A\ket{v}}\cancel{\left(\tr \ketbra{v}{v}^\intercal-\frac{\bra{v}A\ket{v}}{\bra{v}A\ket{v}}\right)}\frac{\left(\frac{\tr B}{\tr A}-\frac{\bra{v}B\ket{v}}{\bra{v}A\ket{v}~}\right)}{\left(\frac{N}{\tr A}-\frac{1}{\bra{v}A\ket{v}}\right)}\nonumber\\
&+\left(\mathds{1}-\frac{A}{\bra{v}A\ket{v}}\right)\frac{\bra{v}B\ket{v}}{\bra{v}A\ket{v}}\frac{\cancel{\left(\frac{\tr A}{\tr A}-\frac{\bra{v}A\ket{v}}{\bra{v}A\ket{v}~}\right)}}{\left(\frac{N}{\tr A}-\frac{1}{\bra{v}A\ket{v}}\right)}\nonumber\\
&+\left(\mathds{1}-\frac{A}{\bra{v}A\ket{v}}\right)\frac{\left(\frac{\tr B}{\tr A}-\frac{\bra{v}B\ket{v}}{\bra{v}A\ket{v}~}\right)}{\left(\frac{N}{\tr A}-\frac{1}{\bra{v}A\ket{v}}\right)^2}\left(\frac{N}{\tr A}-\frac{\tr A}{\tr A\bra{v}A\ket{v}}-\cancel{\frac{1}{\bra{v}A\ket{v}~}}+\cancel{\frac{\bra{v}A\ket{v}}{\bra{v}A\ket{v}^2}}\right)\nonumber\\
&= A\frac{\bra{v}B\ket{v}}{\bra{v}A\ket{v}}+\left(\mathds{1}-\frac{A}{\bra{v}A\ket{v}}\right)\frac{\left(\frac{\tr B}{\tr A}-\frac{\bra{v}B\ket{v}}{\bra{v}A\ket{v}~}\right)}{\left(\frac{N}{\tr A}-\frac{1}{\bra{v}A\ket{v}}\right)}.
\end{align}
\end{proof}
Therefore, all further applications of the CPU map will yield the same operator $\Phi^\dagger[B]$.
\section{Fixed point observables}
\label{sec:FxPntObs}

Observe that we showed how to build completely positive unital transformations that have a
specific fixed point. This means we built a family of linear transformations in terms of $Z_A$
that fulfill the fixed point condition (\ref{eq:fixedcond}). The adjoint master equation
is a time-evolution equation of the type
\begin{equation}
\frac{dA}{dt}=\Phi^\dagger[A].
\label{eq:timeEv}
\end{equation}
In other words, the time evolution of the observable is given by a completely positive and
unital map \cite{Heinosaari_Ziman_2011,breuer2002theory}. Consider a discretization of the
time evolution (\ref{eq:timeEv}).
\begin{equation}
\frac{\Delta A}{\Delta t} = \Phi^\dagger[A].
\end{equation}
To go from time $t_0$ to time $t_n$ one would have to apply the channel $n$ times
\begin{equation}
\frac{\Delta_nA}{\Delta_nt} = \Phi^{\dagger n}[A].
\end{equation}
Following proposition \ref{thm:Prop1} we obtain that reiteration of a map $\Phi^\dagger$
yields the initial application. In other words for any observable $A_0$, $n>1$,
\begin{equation}
\Phi^{\dagger n}[A_0]=\Phi^\dagger[A_0].
\end{equation}
Therefore, recovering the continuous limit, we arrive at
\begin{equation}
\frac{dA}{dt}=\Phi^\dagger[A_0].
\label{eq:XfixA}
\end{equation}
If neither of $A_0$ or $\Phi^\dagger$ depend explicitly on $t$ we can integrate generically
Eq. (\ref{eq:XfixA}) and obtain
\begin{equation}
A = \Phi^\dagger[A_0]t.
\end{equation}
Therefore, there are observables that grow linearly in time. Observe now that the growth
$\langle A\rangle = \langle \Phi^\dagger[A_0]\rangle t$.
\section{Kraus operators}
\label{sec:KrausOps}
Remember that a quantum channel (CPTP map) can be written in terms of Kraus operators
\cite{WatrousTheTheoryof2018} as
\begin{equation}
\Phi[\rho]=\sum_iB_i\rho B_i^\dagger,
\end{equation}
for suitable operators $B_i$ which fulfill
\begin{equation}
\sum_iB_i^\dagger B_i=\mathds{I}.
\label{eq:completeness}
\end{equation}
Condition (\ref{eq:completeness}) is necessary for trace preservation.
Then, from equation (\ref{eq:primaldualmaps}) we observe
that
\begin{equation}
\tr[\Phi[\rho]A]=\tr[\sum_iB_i\rho B_i^\dagger A] =\tr[\rho \sum_iB_i^\dagger AB_i] = \tr[\rho\Phi^\dagger[A]],
\end{equation}
therefore,
\begin{equation}
\Phi^\dagger[A]=\sum_iB_i^\dagger AB_i.
\end{equation}
Now, remember that we are considering maps which are unital; therefore
we do not ask the fulfillment of condition (\ref{eq:completeness}) but an
analog condition
\begin{equation}
\Phi^\dagger[\mathds{I}]=\sum_iB_i^\dagger B_i=\mathds{I}.
\end{equation}
We should stress that there is no completeness condition in this
case as $\Phi^\dagger$ is not necessarily a CPTP map.
Observe that our CPU map is written in terms of
a Choi matrix in equation (\ref{eq:ZSol}). Using proposition 2.20 of \cite{WatrousTheTheoryof2018}
we can relate the Choi representation of linear maps to its Kraus
representation. For this purpose, we write a diagonalization of the observable $A$
as
\begin{equation}
A = \sum_ia_i\ketbra{a_i}{a_i}.
\end{equation}
We also complete the identity with a suitable orthonormal basis with respect to
$\ketbra{v}{v}$ as
\begin{equation}
\mathds{I}=\sum_j\ketbra{v_j}{v_j}.
\end{equation}
Let us define $\ketbra{v_r}{v_r}=\ketbra{v}{v}^\intercal$. Using these decompositions
and following the correspondence of Choi and Kraus representations, we
define the following operators
\begin{align}
    B_i^\dagger&= \sqrt{\frac{a_i}{\bra{v}A\ket{v}}~}\ketbra{a_i}{v}, \nonumber\\
C_{ij}^\dagger&= \sqrt{\left(\frac{1-\frac{a_i}{\bra{v}A\ket{v}}~}{\frac{N}{\tr A}-\frac{1}{\bra{v}A\ket{v}}~}\right)\left(\frac{1}{\tr A}-\frac{\delta_{jr}}{\bra{v}A\ket{v}}\right)}\ketbra{a_i}{v_j}.
\label{eq:KrausDuals}
\end{align}
We can check that the completeness relation (\ref{eq:completeness}) is fulfilled
\begin{equation}
\sum_iB_i^\dagger\mathds{I} B_i = \sum_i\frac{a_i}{\bra{v}A\ket{v}}\ketbra{a_i}{a_i}.
\end{equation}
Analogously we have
\begin{align}
    \sum_{ij} C_{ij}^\dagger\mathds{I} C_{ij} &=\sum_{ij}\left(\frac{1-\frac{a_i}{\bra{v}A\ket{v}}~}{\frac{N}{\tr A}-\frac{1}{\bra{v}A\ket{v}}~}\right)\left(\frac{1}{\tr A}-\frac{\delta_{jr}}{\bra{v}A\ket{v}}\right)\ketbra{a_i}{a_i} \nonumber\\
&= \sum_i\left(\frac{1-\frac{a_i}{\bra{v}A\ket{v}}~}{\frac{N}{\tr A}-\frac{1}{\bra{v}A\ket{v}}~}\right)\ketbra{a_i}{a_i}\left(\frac{N}{\tr A}-\frac{1}{\bra{v}A\ket{v}}\right)\nonumber\\
&= \mathds{I}-\sum_i\frac{a_i}{\bra{v}A\ket{v}}\ketbra{a_i}{a_i}.
\end{align}
Therefore, we obtain the unitality condition
\begin{equation}
\Phi^\dagger[\mathds{I}]=\sum_iB_i^\dagger B_i+\sum_{ij} C_{ij}^\dagger C_{ij} = \mathds{I}.
\end{equation}
\section{Example: upper bound on charging power}
\label{sec:TimeDil}

Let us consider a specific model of a quantum system that interacts with an
environment. As mentioned before, the most natural system for us to focus on is a
quantum battery modeled by a harmonic oscillator. The environment is modelled by
another harmonic oscillator, and we will have an interaction term, following \cite{Nonreciprocal_Q_Ahmadi_2024},
\begin{align}
H &= \omega_aa^\dagger a+\omega_bb^\dagger b +\varepsilon(e^{i\omega_Lt}a+e^{-i\omega_Lt}a^\dagger)
+\chi (a^\dagger b+b^\dagger a) \nonumber\\
&= H_0+H_I.
\end{align}
Where we define
\begin{equation}
H_I\equiv\chi(a^\dagger b+b^\dagger a).
\label{eq:Hinteraction}
\end{equation}
We define a state transformation
\begin{equation}
\tilde{\rho} \equiv e^{iH_0t}\rho(t)e^{-iH_0t}.
\end{equation}
A Von Neumann equation, which describes the general evolution of a quantum system
is given in the interaction picture as
\begin{equation}
\frac{d\tilde{\rho}}{dt} = -\frac{i}{\hbar}[\tilde{H},\tilde{\rho}].
\label{eq:vonNeum}
\end{equation}
Notice however that $H_I$ defined in (\ref{eq:Hinteraction}) only considers a transmission
of particles between the main system and the environment, which occur in a one-by-one manner.
This means that if the system loses a particle, the environment wins one and vice versa.
One would like to have a more general interaction Hamiltonian, which
would consider more general transfers of modes between the battery
(main system) and the environment. Something like
\begin{equation}
H_I^G \equiv \sum_{n,r}\chi_{n,r}({a^\dagger}^nb^n+{b^\dagger}^ra^r)
\label{eq:HIG}
\end{equation}
We thus would consider a higher number of transfers, two-by-two, three-by-three, etc., also mixes of
transfers (two-by-three, four-by-five etc).

We would ideally want a perfect transmission between the
environment and the battery.
Such unitary $U_p$ would operate on a state where the environment
is in level $n$ and the battery on level $0$ as
\begin{equation}
U_p\ket{0,n} = \ket{n,0},
\end{equation}
thus we end up the battery at level $N$.
Such a unitary would emulate the action of the interaction
Hamiltonian from Eq. (\ref{eq:HIG}). There is a physical process
that fulfills these requirements; we present it as an ansatz,
\begin{equation}
U = \sum_{n,r}\ketbra{n,r}{r,n}.
\end{equation}
Let us assume that this is the unitary that evolves the system in
the interaction picture for a unit of time. This is a unitary that
acts on the main (battery) system and the environment. Choosing a
suitable state for the environment, we can build a channel that
charges the battery constantly.
Suppose that we start with an environment in an initial state
\begin{equation}
\sigma = \sum_j\sigma_j\ketbra{j}{j}.
\end{equation}
Then applying the whole dynamics to the main system in state $\rho$ and then tracing the
environment part yields
\begin{equation}
\sum_i\bra{i}U\rho\otimes\sigma U^\dagger\ket{i} =\sum_{i,j}\sqrt{\sigma_j}\bra{i}U\ket{j}\rho\bra{j}U^\dagger\ket{i}\sqrt{\sigma_{j}}.
\end{equation}
Then, the Kraus operators of this evolution are given by
\begin{equation}
E_{i,j}^\dagger =\sqrt{\sigma_j}\sum_{n,r}\braket{j}{n}\braket{r}{i}\ketbra{r}{n}.
\end{equation}
Suppose that the starting observable is the energy of the battery,
i.e.
\begin{equation}
E_0 = \omega_b a^\dagger a.
\end{equation}
Observe then that applying the CPU map $\Phi^\dagger$ to $A_0$ yields
\begin{align}
    \Phi^\dagger &[E_0] = \nonumber\\
&\sum_{i,j}\sigma_j\left(\sum_{n,r}\braket{j}{n}\braket{r}{i}\ketbra{r}{n}\right)\omega_b a^\dagger a\left(\sum_{m,l}\braket{i}{l}\braket{m}{j}\ketbra{m}{l}\right).
\label{eq:Phiapplication}
\end{align}
Note that
\begin{equation}
\ketbra{r}{n}a^\dagger a\ketbra{m}{l} = \sqrt{nm}\ketbra{r}{l}\delta_{n-1,m-1}.
\end{equation}
Therefore, for $n=m=0$, the whole term goes to zero. We can thus do the
change $\delta_{n-1,m-1}\rightarrow\delta_{n,m}$. We get then
\begin{align}
\Phi^\dagger[E_0] &= \omega_b\sum_{j,n}\sigma_j n|\braket{j}{n}|^2\sum_{r,l}\delta_{r,l}\ketbra{r}{l},\nonumber \\
&= \omega_b\sum_{j,n}\sigma_j n|\braket{j}{n}|^2\mathds{1},\nonumber\\
&= \omega_b\varphi\mathds{1},
\label{eq:cqeye}
\end{align}
where $\varphi$ is a constant which depends on the state $\sigma$.
As $\Phi^\dagger$ sends the identity to the identity and is linear,
for any $k>0$,
\begin{equation}
\Phi^{\dagger k}[E_0] = \omega_b\varphi\mathds{1}.
\end{equation}

Remember that we are in the interaction picture. Nevertheless, the
transformation of the observable into the Schrödinger picture is
trivial, as the operator is proportional to the unitary and leaves
it the same. Therefore, the integration of the equation
\begin{equation}
\frac{d E}{dt} = \Phi^\dagger[E],
\label{eq:ObsTevol}
\end{equation}
yields for any time $t>0$ and any state $\rho$
\begin{equation}
\langle E(t)\rangle = \omega_b\varphi t.
\end{equation}

Each implementation of $\Phi^\dagger$ from (\ref{eq:Phiapplication})
implies the usage of a state $\sigma$. Therefore, the time evolution
from equation (\ref{eq:ObsTevol}) implies a constant flow of external
states.

Suppose that the basis $\ket{j}$ and $\ket{n}$ are the same. Then
we would get
\begin{equation}
\varphi = \sum_{n}\sigma_n n.
\end{equation}
Now, observe that if the initial state of the environment consists
in $\ket{1}^{\otimes N}$, which means $N$ qubits in the excited state
then the eigenvalues $\sigma_n$ reduce to zero in almost all places
except for one at the highest level. Therefore, in this case,
\begin{equation}
\varphi = 2^{N}-1,
\end{equation}
as the ground state is zero. We thus get an average charging power
$\langle P(t)\rangle = \omega_b(2^N-1)$, which is superextensive in
the number of systems considered.
\section{Discussion}
\label{sec:disc}
We have shown that using an open quantum system approach, a CPU map
which has a desired fixed point can be engineered. This map is
useful because the asymptotic behavior of the CPU map will be
known. The charging power in this case
is trivially calculated.

We give a specific example by using a unitary which interacts with
a system and an environment, interchanging the excitations among them.
We do not give the exact Hamiltonian that presents this behavior
but the unitary. We observe that more general interactions, which
allow the transmission of collective transitions, can yield a faster
charging of the quantum battery.

It is helpful to note that there is
much room for improvement in usual approaches of quantum batteries.
Being more explicit, usual approaches for quantum batteries are
implementations of one interchange of state between a charger and
a battery per time.  We observe that allowing multiple interchanges of 

states, collective effects allow a faster charging of the battery.
\bibliography{bibliography.bib}
\end{document}